\newtheorem{thm}{Theorem}
\newtheorem{cor}{Corollary}
\newtheorem{lem}{Lemma}
\newtheorem{defn}{Definition}
\newtheorem{property}{Property}
\theoremstyle{remark}
\tikzstyle{arw}=[->,>=latex]
\tikzstyle{node}=[draw,rectangle,rounded corners]
\tikzstyle{col1}=[fill=red!80!]
\tikzstyle{col2}=[fill=green!80!black]
\tikzstyle{col3}=[fill=blue!80!]
\def\squarebox#1{\hbox to #1{\hfill\vbox to #1{\vfill}}}
\newcommand{\inout}{-}
\newcommand{\Pbar}{\overline{P}}
\newcommand{\eps}{\varepsilon}
\newcommand{\Ebb}{\mathbb{E}}
\newcommand{\Pbb}{\mathbb{P}}
\newcommand{\Rbb}{\mathbb{R}}
\newcommand{\Ccal}{\mathcal{C}}
\newcommand{\Mcal}{\mathcal{M}}
\newcommand{\Ucal}{\mathcal{U}}
\newcommand{\Vcal}{\mathcal{V}}
\newcommand{\Xcal}{\mathcal{X}}
\newcommand{\Ycal}{\mathcal{Y}}
\newcommand{\Lcal}{\mathcal{L}}
\newcommand{\cn}{\mathcal{C}^n}
\newcommand{\Pbf}{\mathbf{P}}
\newcommand{\Qbf}{\mathbf{Q}}
\begin{document}

\sloppy

\title{A Rate-Distortion Based Secrecy System with Side Information at the Decoders}


\author{Eva C. Song \qquad Paul Cuff \qquad H. Vincent  Poor\\ Dept. of Electrical Eng., Princeton University,  NJ 08544\\ \{csong, cuff, poor\}@princeton.edu}


\maketitle
\begin{abstract}
A secrecy system with side information at the decoders is studied in the context of lossy source compression over a noiseless broadcast channel. The decoders have access to different side information sequences that are correlated with the source. The fidelity of the communication to the legitimate receiver is measured by a distortion metric, as is traditionally done in the Wyner-Ziv problem. The secrecy performance of the system is also evaluated under a distortion metric. An achievable rate-distortion region is derived for the general case of arbitrarily correlated side information. Exact bounds are obtained for several special cases in which the side information satisfies certain constraints. An example is considered in which the side information sequences come from a binary erasure channel and a binary symmetric channel.
\end{abstract}

\section{Introduction}
The wire-tap channel with side information at the decoders has been previously investigated. It was studied in \cite{villard-allerton} under an equivocation constraint at the eavesdropper and a complete characterization of the rate-distortion-equivocation region was derived. A related problem with coded side information was studied in \cite{gunduz}. However, using equivocation as the description of secrecy does not capture how much distortion will occur if the eavesdropper is forced to reconstruct the source. In this work, both the legitimate receiver and the eavesdropper's reconstructions of the source are measured by distortion. Furthermore, the eavesdropper is assumed to make the best use of her side information along with the encoded message. This setting can also be interpreted as a game-theoretic model where the two receivers are playing a zero-sum game and each one is required to output a sequence that is closest to the source sequence being transmitted.

This distortion-based notion of secrecy was also used in \cite{cuff-allerton2010}, \cite{schieler-isit2012} and \cite{song-isit2013} with the presence of secret key sharing between the encoder and the legitimate receiver. It was shown in \cite{schieler-isit2012} that a secret key with any strictly positive rate can force the eavesdropper's reconstruction of the source to be as bad as if she knows only the source distribution, i.e. the distortion under perfect secrecy. This result suggests, if instead of a shared secret key, the decoders have access to different side information, we should be able to force the eavesdropper's reconstruction of the source to be the distortion under perfect secrecy as long as the legitimate receiver's side information is somewhat stronger than the eavesdropper's side information with respect to the source. This is indeed the case, which will be formally stated herein. However, in the more general case, the legitimate receiver may not have the stronger side information. Can a positive distortion still be forced upon the eavesdropper? We will show in this paper that we can encode the source in favor of the legitimate receiver's side information so that the eavesdropper can only make limited use of the encoded message even with the help of her side information.

The proof technique used in the achievability in this paper follows the same line as \cite{cuff-itw2013}, \cite{song-isit2014}, which relies on the soft-covering lemmas. This approach differs from the traditional joint-typicality and random-binning based proofs in that it requires no effort on bounding the different kinds of error events, and the results obtained from this approach apply to both discrete and continuous alphabets, since the soft-covering lemmas have no restriction on the alphabet size.

\section{Preliminaries} \label{prelim}
\subsection{Notation} \label{notation}
A sequence $X_1,..., X_n$ is denoted by $X^n$. Limits taken with respect to ``$n\rightarrow \infty$" are abbreviated as ``$\rightarrow_n$". Inequalities with $\limsup_{n\rightarrow \infty}h_n\leq h$ and $\liminf_{n\rightarrow \infty}h_n\geq h$ are abbreviated as $h_n\leq_n h$ and $h_n\geq_n h$, respectively. 
When $X$ denotes a random variable, $x$ is used to denote a realization, $\mathcal{X}$ is used to denote the support of that random variable, and $\Delta_{\Xcal}$ is used to denote the probability simplex of distributions with alphabet $\Xcal$. The symbol $|\cdot|$ is used to denote the cardinality. A Markov relation is denoted by the symbol $\inout$. 
We use $\Ebb_P$, $\Pbb_P$, and $I_{P}(X;Y)$ to indicate expectation, probability, and mutual information taken with respect to a distribution $P$; however, when the distribution is clear from the context, the subscript will be omitted. We use a bold capital letter $\mathbf{P}$ to denote that a distribution $P$ is random. We use $\Rbb$ to denote the set of real numbers and $\Rbb^+$ to denote the nonnegative subset. 

For a distortion measure $d: \mathcal{X} \times \mathcal{Y}\mapsto \mathbb{R}^+$, we use $\Ebb \left[d(X,Y)\right]$ to measure the distortion of $X$ incurred by reconstructing it as $Y$. The maximum distortion is defined as
$$d_{max}=\max_{(x,y) \in \Xcal\times\Ycal} d(x,y).$$
The distortion between two sequences is defined to be the per-letter average distortion 
$$d(x^n,y^n)=\frac1n\sum_{t=1}^n d(x_t,y_t).$$

\subsection{Total Variation Distance}
The total variation distance between two probability measures $P$ and $Q$ on the same $\sigma$-algebra $\mathcal{F}$ of subsets of the sample space $\Xcal$ is defined as
$$\lVert P-Q\rVert_{TV}\triangleq \sup_{\mathcal{A}\in \mathcal{F}}|P(\mathcal{A})-Q(\mathcal{A})|.$$
\begin{property}[Property 2 \cite{schieler-journal}] \label{property-tv}
The total variation distance satisfies the following properties:
\begin{enumerate}[(a)]
\item \label{a} Let $\eps>0$ and let $f(x)$ be a function in a bounded range with width $b \in\Rbb$. Then
\begin{equation}
\label{tvcontinuous}
\lVert P-Q \rVert_{TV} < \eps \:\Longrightarrow\: \big| \Ebb_P[f(X)] - \Ebb_Q[f(X)] \big | < \eps b.
\end{equation}
\item \label{b} Total variation satisfies the triangle inequality. For any $S \in \Delta_{\Xcal}$, 
\begin{equation}
\lVert P - Q \rVert_{TV} \leq \lVert P - S \rVert_{TV} + \lVert S - Q \rVert_{TV}.
\end{equation}
\item \label{c} Let $P_{X}P_{Y|X}$ and $Q_XP_{Y|X}$ be two joint distributions on $\Delta_{\Xcal\times\Ycal}$. Then 
\begin{equation}
\lVert P_XP_{Y|X} - Q_X P_{Y|X} \rVert_{TV} = \lVert P_X - Q_X \rVert_{TV}.
\end{equation}
\item \label{d} For any $P,Q \in \Delta_{\Xcal\times\Ycal}$, 
\begin{equation}
\lVert P_X - Q_X \rVert_{TV} \leq \lVert P_{XY} - Q_{XY} \rVert_{TV}.
\end{equation}
\end{enumerate}
\end{property}

\subsection{Soft-covering Lemmas}
We now introduce two versions of soft-covering lemma, which will be used for the achievability proof. The basic soft-covering lemma has been used to obtain lossy source coding results \cite{cuff-itw2013} and \cite{song-isit2014}. However, a generalized superposition soft-covering lemma is required for meeting secrecy constraints.

\begin{lem}(\textbf{Basic soft-covering}, \cite{cuff2012distributed}] \label{bsc}
Given a joint distribution $P_{XY}$, let $\Ccal^{(n)}$ be a random collection of sequences $Y^n(m)$, with $m=1,...,2^{nR}$, each drawn independently and i.i.d. according to $P_Y$. Denote by $\Pbf_{X^n}$ the output distribution induced by selecting an index $m$ uniformly at random and applying $Y^n(m)$ to the memoryless channel specified by $P_{X|Y}$. Then if $R>I(X;Y)$,
$$\Ebb_{\cn}\left[\lVert \Pbf_{X^n}-\prod_{t=1}^n P_X\rVert_{TV}\right]\leq \epsilon_n,$$
where $\epsilon_n\rightarrow_n 0$.
\end{lem}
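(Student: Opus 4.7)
The plan is to bound $\Ebb_{\Ccal^{(n)}}\lVert \Pbf_{X^n} - P_X^n \rVert_{TV}$ by combining the $L_1$ representation $2\,\lVert \Pbf_{X^n} - P_X^n\rVert_{TV} = \sum_{x^n}|\Pbf_{X^n}(x^n) - P_X^n(x^n)|$ with a second-moment argument. For each fixed $x^n$, the random quantity $\Pbf_{X^n}(x^n) = 2^{-nR}\sum_{m=1}^{2^{nR}} P^n_{X|Y}(x^n \mid Y^n(m))$ is the empirical average of $2^{nR}$ i.i.d.\ random variables with common mean $P_X^n(x^n)$, so Jensen's inequality gives
\[
\Ebb_{\Ccal^{(n)}}\bigl|\Pbf_{X^n}(x^n) - P_X^n(x^n)\bigr| \leq \sqrt{2^{-nR}\,\mathrm{Var}\bigl(P^n_{X|Y}(x^n \mid Y^n(1))\bigr)}.
\]
The problem then reduces to bounding the sum over $x^n$ of these square-root expressions and showing the total is $o(1)$ whenever $R>I(X;Y)$.

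To obtain a useful per-term variance bound I would pre-process by passing to the \emph{truncated} kernel $\widetilde{P}^n_{X|Y}(x^n\mid y^n) := P^n_{X|Y}(x^n\mid y^n)\,\mathbf{1}\{(x^n,y^n)\in \Tcal_\delta^{(n)}(P_{XY})\}$, where $\Tcal_\delta^{(n)}$ is the jointly $\delta$-typical set. Two useful facts follow. First, by the typicality bound $P^n_{X|Y}(x^n\mid y^n)\leq 2^{-n(H(X|Y)-\delta)}$ on the typical set, the per-term second moment is at most $2^{-n(H(X|Y)-\delta)}\,\widetilde{P}^n_X(x^n)$, where $\widetilde{P}^n_X$ denotes the induced sub-probability marginal. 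Second, the TV deviation of $\widetilde{P}^n_X$ from $P_X^n$, as well as (in expectation) the TV deviation of the random measure $\widetilde{\Pbf}_{X^n}$ from the original $\Pbf_{X^n}$, are each bounded by the joint-atypicality probability $\Pbb_{P^n_{XY}}[(X^n,Y^n)\notin \Tcal_\delta^{(n)}]$, which is $o(1)$ by the AEP. Property~\ref{property-tv}(\ref{b}) then lets me chain these comparisons via the triangle inequality, reducing the task to the ``main term'' $\lVert \widetilde{\Pbf}_{X^n} - \widetilde{P}^n_X\rVert_{TV}$.

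For this main term, inserting the per-letter variance bound, applying Cauchy--Schwarz over the (at most $2^{n(H(X)+\delta')}$) many $x^n$ with $\widetilde{P}^n_X(x^n) > 0$, and using $\sum_{x^n}\widetilde{P}^n_X(x^n)\leq 1$ gives
\[
\sum_{x^n}\sqrt{2^{-nR}\,2^{-n(H(X|Y)-\delta)}\,\widetilde{P}^n_X(x^n)} \leq 2^{-n(R - I(X;Y) - \delta - \delta')/2},
\]
which decays exponentially whenever $\delta,\delta'$ are chosen small enough that $R > I(X;Y) + \delta + \delta'$. Combining with the two $o(1)$ truncation errors delivers $\Ebb_{\Ccal^{(n)}}\lVert\Pbf_{X^n} - P_X^n\rVert_{TV}\to_n 0$, as claimed.

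The main obstacle is the truncation bookkeeping: $\widetilde{P}^n_{X|Y}$ is only a sub-probability kernel, so $\widetilde{\Pbf}_{X^n}$ need not have unit mass, and one must carefully chain the TV comparisons $\Pbf_{X^n} \to \widetilde{\Pbf}_{X^n} \to \widetilde{P}^n_X \to P_X^n$ together while accounting for the total truncation mass loss (a short computation identifies its expectation as exactly the joint-atypicality probability). Once this approximation ledger is set up cleanly, the remainder is a routine typicality calculation, and the rate threshold $R > I(X;Y)$ emerges transparently from the exponent balance $R + H(X|Y) > H(X)$ in the Cauchy--Schwarz estimate above.
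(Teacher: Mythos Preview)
The paper does not supply its own proof of Lemma~\ref{bsc}; it is quoted from \cite{cuff2012distributed} and used as a black box, so there is no in-paper argument to compare against. That said, your second-moment/typicality proof is essentially correct for finite alphabets: the variance-of-an-average reduction, the truncation to the jointly typical set, the per-term bound $\Ebb[\widetilde P^n_{X|Y}(x^n\mid Y^n)^2]\le 2^{-n(H(X|Y)-\delta)}\widetilde P^n_X(x^n)$, and the Cauchy--Schwarz step over the $\le 2^{n(H(X)+\delta')}$-element support all check out, and the truncation bookkeeping via Property~\ref{property-tv}(\ref{b}) is exactly the right way to control the sub-probability defect.

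One point worth flagging is that the paper explicitly advertises the soft-covering lemmas as alphabet-agnostic (``apply to both discrete and continuous alphabets''), whereas your argument leans on strong-typicality counting and hence is confined to finite $\Xcal,\Ycal$. The proof in the cited reference proceeds instead through the information density $\imath(x;y)=\log\frac{dP_{XY}}{d(P_X\times P_Y)}$: one thresholds on the event $\{\imath(x^n;y^n)\le n(I(X;Y)+\gamma)\}$, bounds the contribution of the complementary event by an AEP-type tail probability, and on the good event controls the likelihood ratio directly, yielding a bound valid for arbitrary (discrete or continuous) alphabets and even an exponential rate. Your route is more elementary and perfectly adequate for the finite-alphabet setting used in the example of Section~V, but if you want to match the generality the paper claims, you would need to replace the typical-set counting by an information-spectrum truncation of this kind.
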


\begin{lem}(\textbf{Generalized superposition soft-covering}, \cite{schieler-journal}) \label{gsc}
Given a joint distribution $P_{UVXZ}$, let $\Ccal_U^{(n)}$ be a random codebook of $2^{nR_1}$ sequences in $\Ucal^n$, each drawn independently according to $\prod_{t=1}^n P_U(u_t)$ and indexed by $m_1\in[1:2^{nR_1}]$. For each $m_1$, let $\Ccal_V^{(n)}(m_1)$ be a random codebook of $2^{nR_2}$ sequences in $\Vcal^n$, each drawn independently according to $\prod_{t=1}^nP_{V|U}(v_i|u_i(m_1))$ and indexed by $(m_1,m_2)\in[1:2^{nR_2}]$.
Let 
\begin{eqnarray}
&&\Pbf_{M_1M_2X^nZ^k}(m_1,m_2,x^n,z^k)\nonumber\\
&\triangleq& 2^{n(R_1+R_2)}\prod_{t=1}^n P_{X|UV}(x_t|U_t(m_1),V_t(m_1,m_2))\nonumber\\
&& P_{Z|XUV}(z_t|x_t,u_t,v_t)^{1\{t\in[k]\}},
\end{eqnarray}
and
\begin{eqnarray}
&&\Qbf_{M_1X^nZ^k}(m_1,x^n,z^k)\nonumber\\
&\triangleq&2^{-nR_1}\prod_{t=1}^n P_{X|U}(x_i|U_i(m_1))\nonumber\\
&&P_{Z|XU}(z_i|x_i,U_i(m_1))^{1\{t\in[k]\}}
\end{eqnarray}

If $R_2>I(X;V|U)$, then there exists $\alpha\in(0,1]$, depending only on the gap $R_2-I(X;V|U)$, such that if $k<\lfloor\alpha n\rfloor$, then
\begin{eqnarray}
\Ebb_{\Ccal^{(n)}}\left[\Vert \Pbf_{M_1X^nZ^k}-\Qbf_{M_1X^nZ^k}\Vert_{TV}\right]\leq e^{-\gamma n}\rightarrow_n 0
\end{eqnarray}
for some $\gamma>0$.
\end{lem}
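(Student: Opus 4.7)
The plan is to decouple the outer and inner codebooks by first conditioning on the outer codebook $\Ccal_U^{(n)}$ and on the outer index $M_1 = m_1$, making $u^n \triangleq U^n(m_1)$ a fixed sequence. Under this conditioning, the target distribution factors as
\begin{equation*}
\Qbf_{X^n Z^k \mid M_1 = m_1, \Ccal_U^{(n)}} = \prod_{t=1}^n P_{X|U}(x_t|u_t) \prod_{t=1}^k P_{Z|XU}(z_t|x_t,u_t),
\end{equation*}
which is precisely the output distribution one would obtain by drawing $V_t \sim P_{V|U}(\cdot|u_t)$ i.i.d.\ and then passing each $(u_t, V_t)$ through the memoryless channel that produces $X_t \sim P_{X|UV}$ and, for $t \leq k$, $Z_t \sim P_{Z|XUV}$, since marginalizing $V$ against $P_{V|U}(\cdot|u)$ recovers $P_{X|U}(\cdot|u) P_{Z|XU}(\cdot|\cdot,u)$. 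Meanwhile, $\Pbf_{X^n Z^k \mid M_1=m_1, \Ccal_U^{(n)}}$ is the output distribution when $V^n$ is instead drawn uniformly from the inner codebook of size $2^{nR_2}$, so the task reduces to an inner-layer soft-covering problem.

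The second step is to identify the effective rate requirement. Viewing the composite output as $(X_t,Z_t)$ for $t \leq k$ and $X_t$ alone for $t > k$, the average per-letter mutual information from $V_t$ to this output (given $u_t$) is
\begin{equation*}
\bar{I}(k,n) = \frac{k}{n}\,I(V;X,Z|U) + \frac{n-k}{n}\,I(V;X|U) = I(X;V|U) + \frac{k}{n}\,I(V;Z|X,U).
\end{equation*}
Setting $\delta \triangleq R_2 - I(X;V|U) > 0$ and choosing $\alpha \in (0,1]$ with $\alpha < \delta/(1+I(V;Z|X,U))$, one gets $R_2 - \bar{I}(k,n) \geq \delta/2$ uniformly for every $k < \lfloor \alpha n \rfloor$. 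This is exactly the rate condition needed to cover the inhomogeneous composite channel with the inner codebook.

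The third step is to invoke an exponential version of the basic soft-covering lemma, namely the Chernoff-bound argument underlying Lemma~\ref{bsc} (as developed in \cite{cuff2012distributed} and reused in \cite{schieler-journal}), adapted to the inhomogeneous output alphabet described above. This yields, in expectation over $\Ccal_V^{(n)}(m_1)$ and uniformly in $u^n$, a bound of the form
\begin{equation*}
\lVert \Pbf_{X^n Z^k \mid M_1 = m_1, \Ccal_U^{(n)}} - \Qbf_{X^n Z^k \mid M_1 = m_1, \Ccal_U^{(n)}} \rVert_{TV} \leq e^{-\gamma n}
\end{equation*}
for some $\gamma > 0$ depending on $\delta$. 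Appending the independent index $M_1$ preserves total variation by Property~\ref{property-tv}(\ref{c}), so averaging the above bound over $m_1$ and $\Ccal_U^{(n)}$ delivers the claimed inequality.

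The chief obstacle is the inhomogeneous soft-covering step: the per-letter channel and target marginal change between the first $k$ and last $n-k$ coordinates, so Lemma~\ref{bsc} cannot be applied verbatim. The Chernoff-type proof of the basic lemma, however, factors over coordinates and only requires that the total rate exceed $\sum_t$ of the per-letter information densities with a uniform gap; this is precisely what the choice of $\alpha$ secures. A secondary subtlety is uniformity in $u^n$, which reduces to bounding the per-letter moment generating functions determined by the fixed distribution $P_{UVXZ}$.
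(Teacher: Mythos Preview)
The paper does not prove this lemma at all: it is stated with a citation to \cite{schieler-journal} and used as a black box in the achievability proof of Theorem~\ref{inner}. There is therefore no ``paper's own proof'' to compare against.

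That said, your sketch is essentially the right way to prove the result and is in the spirit of the argument in the cited reference. The reduction by conditioning on $(\Ccal_U^{(n)}, m_1)$ is exactly the point of the superposition structure, and the computation of the effective average information $\bar I(k,n) = I(X;V|U) + \tfrac{k}{n} I(V;Z|X,U)$ correctly identifies why $\alpha$ depends only on the gap $R_2 - I(X;V|U)$. One quibble: you do not actually need the inner soft-covering bound to hold \emph{uniformly} in $u^n$. The cleaner route is to carry the expectation over $\Ccal_U^{(n)}$ through the Chernoff calculation, so that each coordinate $t$ contributes a factor $\Ebb_{U_t \sim P_U}[\cdot]$ to the moment generating function; this yields an exponent governed by the single-letter distribution $P_{UVXZ}$ directly, without any uniformity assumption on the alphabet $\Ucal$. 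Insisting on a bound uniform in $u^n$ would force an unnecessary finiteness or boundedness hypothesis.
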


\section{Problem Setup and Main Results}\label{result}
\subsection{Problem Setup}
We want to determine the rate-distortion region for a secrecy system with an i.i.d. source and two side information sequences $(X^n, B^n, W^n)$ distributed according to $\prod_{t=1}^n \Pbar_{XBW}(x_t,b_t,w_t)$ satisfying the following constraints:
\begin{itemize}
\item Encoder $f_n: \mathcal{X}^n \mapsto \mathcal{M}$ (possibly stochastic);
\item Legitimate receiver decoder $g_n: \mathcal{M}\times \mathcal{B}^n \mapsto {\mathcal{Y}}^n$ (possibly stochastic);
\item Eavesdropper decoder $P_{Z^n|MW^n}$;
\item Compression rate: $R$, i.e. $|\mathcal{M}|=2^{nR}$.
\end{itemize}
The system performance is measured according to the following distortion metrics:
\begin{itemize}
\item Average distortion for the legitimate receiver: 
$$\Ebb[d_b(X^n,Y^n)]\leq_n D_b$$
\item Minimum average distortion for the eavesdropper: 
$$\min_{P_{Z^n|MW^n}}\Ebb[d_w(X^n,Z^n)]\geq_n D_w$$
\end{itemize}
Note that $d_b$ and $d_w$ can be the same or different distortion measures.

\begin{defn}
The rate-distortion triple $(R,D_b,D_w)$ is achievable if there exists a sequence of rate $R$ encoders and decoders $(f_n,g_n)$ such that
$$\Ebb[d_b(X^n,Y^n)]\leq_n D_b$$
and
$$\min_{P_{Z^n|MW^n}}\Ebb[d_w(X^n,Z^n)]\geq_n D_w.$$
\end{defn}

The above mathematical formulation is illustrated in Fig.\ref{setup}. 
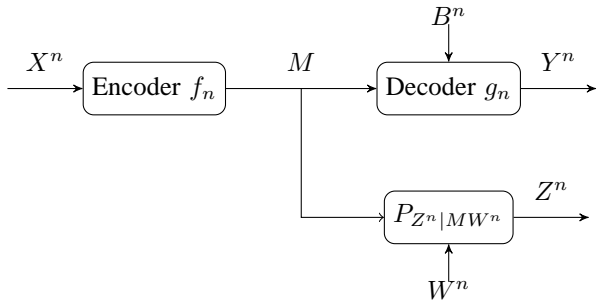
\begin{figure}
  \centering
\begin{tikzpicture}
[node distance=1cm,minimum height=7mm,minimum width=10mm,arw/.style={->,>=stealth'}]
  \node[coordinate] (source) {};
  \node[rectangle,draw,rounded corners] (encoder) [right =of source] {Encoder $f_n$};
  \node[rectangle,draw,rounded corners] (decoder) [right =2cm of encoder] {Decoder $g_n$};
  \node[coordinate] (sink)[right =of decoder] {};
  \node[coordinate] (side)[above =0.5cm of decoder] {};
  \node[rectangle,draw,rounded corners] (decoder2) [below =of decoder] {$P_{Z^n|MW^n}$}; 
  \node[coordinate] (mid)[right=1cm of encoder]{};
  \node[coordinate] (sink2)[right =of decoder2]{};
  \node[coordinate] (side2)[below =0.5cm of decoder2] {};
    
  \draw [arw] (source) to node[midway,above]{$X^n$} (encoder);
  \draw [arw] (encoder) to node[midway,above]{$M$} (decoder);  
  \draw [arw] (decoder) to node[midway,above]{$Y^n$} (sink);
  \draw [arw] (side) to node[midway,above]{$B^n$} (decoder);
  \draw [->] (mid) |- (decoder2);
  \draw [arw] (decoder2) to node[midway, above]{$Z^n$} (sink2);
  \draw [arw] (side2) to node[midway,below]{$W^n$} (decoder2);
\end{tikzpicture}
\caption{Secrecy system setup with side information at the decoders}
\label{setup}
\end{figure}

For the special case of lossless compression between the transmitter and the legitimate receiver, we make the following definition.
\begin{defn}
A rate-distortion pair $(R,D_w)$ is achievable if there exists a sequence of encoders and decoders $(f_n, g_n)$ such that 
$$\lim_{n\rightarrow \infty}\mathbb{P}\left[X^n\neq Y^n\right] = 0$$
and
$$ \min_{P_{Z^n|M,W^n}} \mathbb{E} [d_w(X^n, Z^n)]\geq_n D_w.$$
\end{defn}

\subsection{Less Noisy and More Capable Side Information}
\begin{defn}
The side information $B$ is strictly less noisy than the side information $W$ with respect to $X$ if
$$I(V;B)> I(V;W)$$
for all $V$ such that $V\inout X\inout (B,W)$ and $I(V;B)>0$.
\end{defn}

\begin{defn}
The side information $B$ is strictly more capable than the side information $W$ with respect to $X$ if
$$I(X;B)> I(X;W).$$
\end{defn}

\subsection{Main Achievability Result}
\begin{thm} \label{inner}
A rate-distortion triple $(R,D_b,D_w)$ is achievable if 
\begin{eqnarray}
&&R> I(V;X|B)\\
&&D_b\geq\Ebb [d_b(X,Y)]\\
&&D_w\leq\min_{z(u,w)}\Ebb [d_w(X,Z(U,W))]\\
&&I (V;B|U)>I (V;W|U)
\end{eqnarray}
for some $\Pbar_{UVXBW}=\Pbar_{XBW}\Pbar_{V|X}\Pbar_{U|V}$, where $Y=\phi(V,B)$ for some function $\phi(\cdot,\cdot)$.
\end{thm}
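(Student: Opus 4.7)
My plan is to combine superposition coding, Wyner--Ziv binning, and a likelihood encoder. Generate $2^{nR_0}$ outer codewords $U^n(m_0)$ i.i.d.\ $\Pbar_U$ and, for each $m_0$, $2^{nR_1}$ inner codewords $V^n(m_0,m_1)$ i.i.d.\ $\Pbar_{V|U}$ given $U^n(m_0)$, with $R_0>I(U;X)$ and $R_1>I(V;X|U)$. Bin the codewords at rates $R_U>I(U;X|B)$ and $R_V>I(V;X|U,B)$, so the transmitted rate $R=R_U+R_V$ is arbitrarily close to $I(V;X|B)=I(U;X|B)+I(V;X|U,B)$; the hypothesis $I(V;B|U)>I(V;W|U)$ leaves room to also require $R_1-R_V>I(V;W|U)$, which is what will keep the $V$-bin too large for the eavesdropper to resolve with $W^n$. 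The encoder is the likelihood encoder that samples $(M_0,M_1)$ with probability proportional to $\prod_t \Pbar_{X|V}(X_t|V_t(M_0,M_1))$ and transmits the two bin indices; the legitimate decoder searches within the announced bins for the unique $(\hat M_0,\hat M_1)$ jointly typical with $B^n$ and outputs $Y_t=\phi(V_t(\hat M_0,\hat M_1),B_t)$.

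\textbf{Legitimate distortion.} Lemma~\ref{bsc} applied at the combined covering rate $R_0+R_1>I(V;X)$ shows that the code-first distribution on $X^n$ is close in total variation to $\prod_t \Pbar_X$, so the distortion analysis can be carried out under the tractable code-first law. Standard Wyner--Ziv decoding at the stated bin rates recovers $(M_0,M_1)$ with vanishing error, so the joint distribution of $(X^n,Y^n)$ is close in total variation to the i.i.d.\ law $\prod_t \Pbar_{X,\phi(V,B)}$; Property~\ref{property-tv} then yields $\Ebb[d_b(X^n,Y^n)]\le_n \Ebb[d_b(X,Y)]\le D_b$.

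\textbf{Eavesdropper distortion -- the main obstacle.} For the secrecy bound I apply Lemma~\ref{gsc} with $Z\leftarrow W$ and rates $(R_1,R_2)\leftarrow(R_0,R_1)$ to obtain
\begin{align*}
\lVert\Pbf_{M_0 X^n W^n}-\Qbf^\ast_{M_0 X^n W^n}\rVert_{TV}\to_n 0,
\end{align*}
where under the ideal $\Qbf^\ast$, $X^n$ is generated i.i.d.\ from $U^n(M_0)$ via $\Pbar_{X|U}$ (without reference to $M_1$) and $W^n$ from $X^n$ via $\Pbar_{W|X}$; the Markov chain $W\inout X\inout (U,V)$ built into the theorem's joint law makes the $W^n|X^n$ kernel identical on both sides, lifting the lemma's $k<\alpha n$ restriction via Property~\ref{property-tv}. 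Because augmenting the eavesdropper's observation with the outer index $M_0$ can only decrease her minimum distortion, it suffices to lower bound $\min_{\psi(M_0,M,W^n)}\Ebb[d_w(X^n,\psi)]$; under $\Qbf^\ast$ the bin $M=B(M_0,M_1')$ is a function of $M_0$ and a fresh uniform $M_1'$ independent of $(X^n,W^n)$, so the conditional law of $X_t$ given $(M_0,M,W^n)$ collapses to that given $(U_t(M_0),W_t)$. A per-symbol Bayes inequality together with the single-letter identity $D_w=\Ebb[\min_z \Ebb[d_w(X,z)|U,W]]$ then produces the $nD_w$ bound, which transfers back to the true distribution via the TV closeness and Property~\ref{property-tv}. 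The principal technical difficulty is precisely this propagation from Lemma~\ref{gsc} -- which controls only the marginal $(M_0,X^n,W^n)$ -- through the binning structure to the distortion evaluated against the transmitted message $M$; the rate feasibility that enables a simultaneous choice of $R_V$ satisfying both $R_V>I(V;X|U,B)$ and $R_1-R_V>I(V;W|U)$ is exactly the hypothesis $I(V;B|U)>I(V;W|U)$.
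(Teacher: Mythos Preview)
Your scheme, rates, likelihood encoder, and the legitimate-receiver analysis are essentially those of the paper (you call $R_p,R_p',R_s,R_s'$ by other names, but the content matches). The gap is in the eavesdropper step.

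You invoke Lemma~\ref{gsc} with $(R_1,R_2)\leftarrow(R_0,R_1)$ and lemma-$X\leftarrow X$, which uses $R_1>I(V;X|U)$. That application is correct, but its conclusion is a statement about the \emph{marginal} $(M_0,X^n,W^n)$: the inner index $M_1$---and therefore the transmitted inner bin $M_V$---is integrated out. Your next sentence then asserts that under $\Qbf^\ast$ the bin is ``a function of $M_0$ and a fresh uniform $M_1'$ independent of $(X^n,W^n)$'' and transfers the distortion bound back to $\Pbf$. This inference is not licensed: Lemma~\ref{gsc} does \emph{not} give $\lVert\Pbf_{M_0M_1X^nW^n}-\Qbf^\ast_{M_0M_1'X^nW^n}\rVert_{TV}\to 0$, and in fact that closeness is false (under the code-first law $X^n$ depends on $M_1$ through $V^n(M_0,M_1)$, whereas under your $\Qbf^\ast$ it does not). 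Consequently the per-symbol Bayes reduction to $\min_{z(u,w)}\Ebb[d_w(X,z(U,W))]$ is not established for the true system, because the eavesdropper sees $M_V$ and you have no control over its joint law with $(X^n,W^n)$. Notice, too, that the rate condition $R_1-R_V>I(V;W|U)$ you carefully arranged is never used in your secrecy argument; that is a symptom of the problem.

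The fix is exactly how the paper proceeds: freeze the transmitted inner message and apply Lemma~\ref{gsc} to the \emph{residual} sub-codebook. In your notation, for each fixed $m_V$ the remaining inner index ranges over $2^{n(R_1-R_V)}$ codewords; identify lemma-$X\leftarrow W$ (so the covering is of $W^n$, not $X^n$) and place a single coordinate $X_i$ in the $Z$-slot with $k=1$. The hypothesis $R_1-R_V>I(V;W|U)$ is precisely the lemma's rate condition, and the conclusion is total-variation closeness of the joint law of $(M_0,M_V,W^n,X_i)$ to an idealized distribution in which $W^n$ is memoryless through $\Pbar_{W|U}$ given $U^n(M_0)$ and $X_i$ has conditional law $\Pbar_{X|UW}(\cdot\mid U_i(M_0),W_i)$, independent of $M_V$. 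That idealized law carries the Markov relation $X_i\!\inout\!(U_i,W_i)\!\inout\!(M_0,M_V,W^n)$, which is what your per-symbol step needs; averaging over $i$ and transferring via Property~\ref{property-tv}(\ref{a}) then yields $D_w$. Your attempt to ``lift the $k<\alpha n$ restriction'' by the Markov chain $W\inout X\inout(U,V)$ is unnecessary once you cover $W^n$ directly, and it cannot be used for $X$ because $X$ is not conditionally independent of $V$ given $U$.
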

The proof of the above theorem is provided in the Section \ref{achievability}.

Theorem \ref{inner} involves two auxiliary variables $U$ and $V$ that are correlated with the source $X$ in a Markov chain relationship.  The variable $V$ can be understood as the lossy representation of $X$ that is communicated efficiently using random binning to the intended receiver, which will be used with the side information $B$ to estimate $X$, just as in the setting without an eavesdropper which was pioneered by \cite{wyner-ziv}.  The purpose of the auxiliary variable $U$ is to provide secrecy similar to the way secrecy is achieved in \cite{villard-allerton}.  The side information at the intended receiver must be better than that of the eavesdropper (as measured by mutual information with $V$) in order to prevent decoding of $V$.  The variable $U$ (if needed) is given away to all parties as the first layer of a superposition code in order to generate this condition for $V$.

\subsection{A Trivial Converse}
A tight outer bound is not attained and hence, the optimality of Theorem \ref{inner} is not yet known. A trivial outer bound is stated as follows for completeness.
\begin{thm} \label{outer}
If a rate-distortion triple $(R,D_b,D_w)$ is achievable, then 
\begin{eqnarray}
&&R> I(V;X|B) \label{rate-wz}\\
&&D_b\geq\Ebb [d_b(X,Y)] \label{distortion-bob}\\
&&D_w\leq \min_{z(w)} \Ebb [d_w(X,Z(W))] \label{distortion-eve}
\end{eqnarray}
for some $\Pbar_{VXBW}=\Pbar_{XBW}\Pbar_{V|X}$, where $Y=\phi(V,B)$ for some function $\phi(\cdot,\cdot)$ and all the quantities are with respect to $\Pbar_{XBW}$.
\end{thm}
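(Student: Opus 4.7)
The plan is to adapt the standard Wyner--Ziv converse for the rate and Bob's distortion, while the bound on $D_w$ will come essentially for free by restricting Eve's decoder to ignore $M$. Fix any achievable scheme $(f_n,g_n)$, let $M=f_n(X^n)$, and let $\Theta$ denote the internal randomness of $g_n$, so that $Y^n = g_n(M,B^n,\Theta)$. For each $t \in [1\!:\!n]$ I would define the auxiliary
\[
V_t \;\triangleq\; (M,\, X^{t-1},\, B^{n\setminus t},\, \Theta).
\]
The first step is to check the Markov chain $V_t \inout X_t \inout (B_t,W_t)$: conditional on $X_t$, the tuple $V_t$ is a function of $(X^{n\setminus t}, B^{n\setminus t}, \Theta)$, which is independent of $(B_t,W_t)$ because $(X,B,W)^n$ is i.i.d.\ under $\Pbar_{XBW}$ and $\Theta$ is independent of the source.

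For the rate I would chain
\[
nR \;\geq\; H(M) \;\geq\; I(M;X^n \mid B^n) \;=\; \sum_{t=1}^n I(M;X_t \mid B^n, X^{t-1}),
\]
and then use the i.i.d.\ property to write each summand as $I(V_t;X_t\mid B_t)$ (adding $\Theta$ is free, and the side coordinates $X^{t-1},B^{n\setminus t}$ can be inserted on both sides of the mutual information since $X_t \inout B_t \inout (X^{t-1},B^{n\setminus t})$). For Bob's fidelity, $Y_t$ is by construction a function of $(M,B^n,\Theta)$ and hence of $(V_t,B_t)$; denote this map $\phi_t$. Then $D_b \geq_n \tfrac1n\sum_t \Ebb[d_b(X_t,\phi_t(V_t,B_t))]$.

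The bound on $D_w$ is the only place secrecy enters, and it collapses immediately: restricting the infimum to decoders of the form $P_{Z^n\mid W^n}$ can only raise the resulting distortion, so
\[
D_w \;\leq_n\; \min_{P_{Z^n\mid W^n}} \Ebb[d_w(X^n,Z^n)] \;=\; \frac{1}{n}\sum_{t=1}^n \min_{z_t(w_t)} \Ebb[d_w(X_t,z_t(W_t))],
\]
and i.i.d.\ collapses each per-letter minimum to the single-letter quantity $\min_{z(w)} \Ebb[d_w(X,z(W))]$.

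To finish I would single-letterize with a time-sharing index $T$ uniform on $[1\!:\!n]$ and independent of everything, setting $V=(V_T,T)$, $X=X_T$, $B=B_T$, $W=W_T$, $Y=Y_T$. The Markov chain $V \inout X \inout (B,W)$ and the product factorization $\Pbar_{VXBW} = \Pbar_{XBW}\Pbar_{V\mid X}$ are preserved, $\tfrac1n\sum_t I(V_t;X_t\mid B_t)$ becomes $I(V;X\mid B)$, and the functions $\phi_t$ assemble into a single $\phi((v',t),b) = \phi_t(v',b)$. Taking $n\to\infty$ and using compactness of the induced distributions to extract a limiting joint law closes the argument. There is no genuine obstacle --- which is exactly why the authors label this converse trivial --- the only care needed is to absorb stochastic encoder/decoder randomness into $\Theta$ and to handle the $\limsup/\liminf$ conventions, both of which are standard.
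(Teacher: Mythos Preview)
Your proposal is correct and follows essentially the same approach as the paper: the authors' proof is a two-line sketch that invokes ``the Wyner--Ziv converse'' for \eqref{rate-wz}--\eqref{distortion-bob} and the observation that Eve can always ignore $M$ for \eqref{distortion-eve}, and you have simply written out the standard details of that converse (auxiliary identification, Markov chain verification, time-sharing). The only minor imprecision is your phrase ``$V_t$ is a function of $(X^{n\setminus t},B^{n\setminus t},\Theta)$'' --- $M$ of course depends on $X_t$ as well, but the conditional-independence argument you intend (that $(B_t,W_t)\perp V_t \mid X_t$ by the i.i.d.\ structure) is sound.
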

\begin{proof}
To get $(\ref{rate-wz})$ and $(\ref{distortion-bob})$, we just need to apply the Wyner-Ziv converse; and to get $(\ref{distortion-eve})$, observe that the reconstruction cannot be worse than the symbol-by-symbol estimation of $X^n$ from $W^n$ without using $M$.
\end{proof}

\subsection{Less Noisy Side Information}
\begin{cor} \label{less-noisy}
If the legitimate receiver has \textbf{strictly} less noisy side information than the eavesdropper, the converse of Theorem \ref{outer} is tight.
\end{cor}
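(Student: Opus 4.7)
The plan is to show that under the strictly less noisy hypothesis, every triple admissible by the outer bound of Theorem~\ref{outer} is also admissible by the inner bound of Theorem~\ref{inner}; the reverse containment is immediate from the converse. Given a triple $(R, D_b, D_w)$ attained in the outer region by a witness $\Pbar_{V|X}$ and reconstruction function $\phi$, the natural candidate in Theorem~\ref{inner} is to take $U$ constant and $V' = V$. With $U$ constant the inner-bound eavesdropper distortion $\min_{z(u,w)} \Ebb[d_w(X,Z(U,W))]$ collapses to $\min_{z(w)} \Ebb[d_w(X,Z(W))]$, matching the outer bound exactly, while the rate constraint $R > I(V;X|B)$ and the legitimate-receiver distortion constraint $D_b \geq \Ebb[d_b(X,\phi(V,B))]$ transfer verbatim.

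What remains is the secrecy inequality $I(V;B|U) > I(V;W|U)$, which with $U$ constant becomes $I(V;B) > I(V;W)$. Whenever $I(V;B) > 0$, this is exactly the strictly less noisy hypothesis applied to $V$, and the inner-bound conditions are all satisfied; hence the triple is achievable.

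The main obstacle is the boundary case $I(V;B) = 0$, which the strictly less noisy definition does not directly control. To handle it, I would perturb the witness: let $T \sim \text{Bern}(\theta)$ be independent of everything for small $\theta > 0$, and replace $V$ by $V' = (V, T, X_T)$, where $X_T = X$ when $T = 1$ and $X_T$ is a fixed null symbol otherwise. The chain $V' \inout X \inout (B,W)$ is preserved, and short computations yield $I(V';B) = \theta I(X;B)$ together with $I(V';X|B) = I(V;X|B) + \theta H(X|V,B)$. Taking $\phi'(V',B) = \phi(V,B)$ leaves $D_b$ unchanged. In the non-degenerate regime $I(X;B) > 0$, one then has $I(V';B) > 0$, so the strictly less noisy hypothesis gives $I(V';B) > I(V';W)$. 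Since $R > I(V;X|B)$ is strict, choosing $\theta$ small enough preserves $R > I(V';X|B)$, and continuity in $\theta$ at zero completes the matching of the two regions.
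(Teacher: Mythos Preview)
Your core move---take $U$ constant so that the conditions of Theorem~\ref{inner} collapse to those of Theorem~\ref{outer}---is exactly the paper's one-line proof (``set the $U$ in Theorem~\ref{inner} to be $\varnothing$''). You go beyond the paper by treating the boundary case $I(V;B)=0$, which the strictly-less-noisy definition does not cover and which the paper simply ignores; your perturbation $V'=(V,T,X_T)$ is a clean way to push the witness into the region where the hypothesis bites, and your mutual-information computations are correct (note $I(X;B|V)=I(X;B)$ precisely because $I(V;B)=0$ and $V\inout X\inout B$). The only residual corner you leave open is the truly degenerate situation $I(X;B)=0$, where the strictly-less-noisy condition is vacuous and neither your argument nor the paper's can invoke Theorem~\ref{inner}; this is arguably a defect of the corollary's statement rather than of your proof.
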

\begin{proof}
To see the achievability, we just need to set the $U$ in Theorem \ref{inner} to be $\varnothing$.
\end{proof}

Note that the strictly less noisy condition meets the inequality in Theorem \ref{inner}. Corollary \ref{less-noisy} covers the case of degraded side information at the eavesdropper, i.e. $X-B-W$, except for the corner case where $I(X;W)=I(X;B)$.

\subsection{Lossless Compression}
When the legitimate receiver must reconstruct the source sequence losslessly, we have the following inner bound.
\begin{cor} \label{lossless-ach}
$(R,D_w)$ is achievable if 
\begin{eqnarray}
&&R> H(X|B)\\
&&D_w\leq \min_{z(u,w)}\mathbb{E}[d_w(X,z(U,W))]\\
&&I(X;B|U)>I(X;W|U)  \label{ineq}
\end{eqnarray}
for some $\Pbar_{UXBW}=\Pbar_{XBW}\Pbar_{U|X}$. 
\end{cor}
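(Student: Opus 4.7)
The plan is to derive Corollary \ref{lossless-ach} as an immediate specialization of Theorem \ref{inner}. Setting $V\equiv X$ (so that $\Pbar_{V|X}$ is the identity kernel) and choosing the reconstruction function $\phi(v,b)=v$ makes each hypothesis of the theorem line up with one of the corollary: the factorization $\Pbar_{UVXBW}=\Pbar_{XBW}\Pbar_{V|X}\Pbar_{U|V}$ collapses to $\Pbar_{UXBW}=\Pbar_{XBW}\Pbar_{U|X}$, the rate condition becomes $R>I(X;X|B)=H(X|B)$, the side-information inequality becomes $I(X;B|U)>I(X;W|U)$, and the eavesdropper distortion criterion is preserved verbatim.

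What remains is to convert the expected-distortion guarantee in Theorem \ref{inner} into the block-error guarantee $\Pbb[X^n\neq Y^n]\to 0$ required by the lossless definition. I plan to leverage the fact that the soft-covering machinery of Section \ref{prelim} (Lemmas \ref{bsc} and \ref{gsc}) produces an approximation that is natively total-variation based: the induced joint distribution on $(X^n,V^n)$ is exponentially close in total variation to $\prod_{t=1}^n\Pbar_{XV}$. When $V=X$, this target distribution is concentrated on the diagonal $\{x^n=v^n\}$, so applying Property \ref{property-tv}(\ref{a}) to the indicator of $\{X^n\neq V^n\}$ shows that the probability of this event under the induced measure is bounded by the vanishing total variation distance. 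Since $Y^n=V^n$ by our choice of $\phi$, this yields $\Pbb[X^n\neq Y^n]\to 0$.

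The main (and essentially only) obstacle is verifying that the achievability proof behind Theorem \ref{inner} really does establish total-variation closeness as an intermediate step, rather than only the expected-distortion bound written into its statement. This should fall out of inspecting the soft-covering argument of Lemma \ref{gsc}, which produces TV convergence natively; once that sharper statement is extracted, no further work is required to obtain the corollary.
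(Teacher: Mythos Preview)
Your proposal is correct and takes essentially the same approach as the paper: set $V=X$ in Theorem~\ref{inner}, observe that all hypotheses collapse to those of the corollary, and then extract the block-error guarantee $\Pbb[X^n\neq Y^n]\to 0$ directly from the total-variation structure of the achievability proof rather than from the expected-distortion statement. The paper's own proof is in fact terser than yours---it simply asserts that ``the additional proof required for lossless recovery follows naturally from the construction''---so your explicit identification of the TV closeness (from the soft-covering steps and the decoding-error bound) as the mechanism that forces $X^n=Y^n$ with high probability is a faithful and slightly more detailed rendering of what the paper intends.
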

\begin{proof}
This is consistent with Theorem \ref{inner} by setting $V=X$ and that the additional proof required for lossless recovery follows naturally from the construction of the achievability scheme for Theorem \ref{inner}.
\end{proof}

\begin{cor}\label{lossless-capable}
If the legitimate receiver has strictly more capable side information than the eavesdropper with respect to the source, then the rate-distortion pair $(R,D_w)$ is achievable if and only if
\begin{eqnarray}
&&R\geq H(X|B)\\
&&D_w\leq \min_{z(w)}\mathbb{E}[d_w(X,z(W))].
\end{eqnarray}
\end{cor}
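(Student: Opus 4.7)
The plan is to establish both directions by specializing results already proved in the paper: the converse follows from the trivial outer bound (Theorem \ref{outer}) and the achievability follows from Corollary \ref{lossless-ach} with the outer auxiliary variable $U$ trivialized.

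First I would dispatch the converse. The rate inequality $R\geq H(X|B)$ is the familiar Slepian--Wolf / Wyner--Ziv lower bound for lossless source coding with decoder side information; it can be obtained by applying the Wyner--Ziv converse in Theorem \ref{outer} with $d_b$ taken to be Hamming distortion and $D_b\to 0$, which forces $V=X$ in (\ref{rate-wz}), or alternatively by a direct Fano-inequality argument using the fact that $X^n$ must be recoverable from $(M,B^n)$ asymptotically. The distortion inequality on the eavesdropper is already stated in (\ref{distortion-eve}) of Theorem \ref{outer}, since that bound does not involve $V$ and so specializes without change.

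Next, for achievability, I would invoke Corollary \ref{lossless-ach} with $U$ set to a constant. The three conditions then reduce to: (i) $R>H(X|B)$; (ii) $D_w\leq\min_{z(w)}\Ebb[d_w(X,z(W))]$, because a constant $U$ drops out of the minimization; and (iii) the auxiliary inequality (\ref{ineq}) becomes $I(X;B)>I(X;W)$, which is precisely the strictly-more-capable hypothesis. The only gap between this statement and the one we wish to prove is the strict versus non-strict rate inequality, which is closed by the usual argument that the set of achievable pairs is closed, so achievability at every $R>H(X|B)$ yields achievability at $R=H(X|B)$ as well.

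The main obstacle, such as it is, is simply verifying that the strictly-more-capable hypothesis lines up exactly with (\ref{ineq}) once $U$ is trivialized; both reduce to $I(X;B)>I(X;W)$, so the match is immediate. The corollary is therefore essentially a direct specialization of Corollary \ref{lossless-ach} combined with the trivial outer bound, rather than a genuinely new technical ingredient.
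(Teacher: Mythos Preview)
Your proposal is correct and matches the paper's intended reasoning: the paper omits an explicit proof of Corollary~\ref{lossless-capable}, but it is meant to follow exactly as you describe---achievability from Corollary~\ref{lossless-ach} with $U=\varnothing$ (paralleling the proof of Corollary~\ref{less-noisy}), and the converse from Theorem~\ref{outer} specialized to lossless reconstruction.
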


\section{Proof of Achievability}\label{achievability}
We now give the achievability proof of Theorem \ref{inner} using the soft-covering lemmas. We apply the same proof technique using the likelihood encoder as introduced in \cite{song-isit2014} with the modification of using a superposition codebook.

The source is encoded into four messages $M_p$, $M_p'$, $M_s$ and $M_s'$, where $M_p$ and $M_s$ are transmitted and $M_p'$ and $M_s'$ are virtual messages that are not physically transmitted, but will be recovered with small error at the legitimate receiver with the help of the side information. On the other hand, $M_p$ and $M_p'$ play the role of public messages, which both the legitimate receiver and the eavesdropper will decode; $M_s$ and $M_s'$ index a codeword that is kept secret from the eavesdropper, which only the legitimate receiver can make sense of with its own side information.

Fix a distribution $\Pbar_{UVXBW}=\Pbar_U\Pbar_{V|U}\Pbar_{X|V}\Pbar_{BW|X}$ satisfying
\begin{eqnarray*}
I_{\Pbar} (V;B|U)>I_{\Pbar} (V;W|U),\\
\Ebb_{\Pbar} [d_b (X,\phi(V,B))] \leq D_b,\\
\min_{z(u,w)}\Ebb_{\Pbar} [d_w(X,Z(U,W))] \geq D_w,
\end{eqnarray*}
and fix rates $R_p$, $R_p'$, $R_s$, $R_s'$ such that 
\begin{eqnarray}
R_p+R_p'>I_{\Pbar}(U;X),\nonumber\\
R_p'<I_{\Pbar}(U;B),\nonumber\\
R_s+R_s'>I_{\Pbar}(X;V|U), \nonumber\\
I_{\Pbar}(V;W|U)<R_s'<I_{\Pbar}(V;B|U).\nonumber
\end{eqnarray}
The distribution induced by the encoder and decoder is 
\small
\begin{eqnarray}
&&\Pbf(x^n,b^n,w^n,m_p,m_p',m_s,m_s',\hat{m}_p',\hat{m}_s',y^n)\nonumber\\
&\triangleq&  \Pbar_{X^nB^nW^n}(x^n,b^n,w^n) \Pbf_{E}(m_p,m_p',m_s,m_s'|x^n)\nonumber\\
&&\Pbf_D(\hat{m}_p',\hat{m}_s'|m_p,m_s,b^n) \Pbf_{\Phi}(y^n|m_p,\hat{m}_p',m_s,\hat{m}_s',b^n),
\end{eqnarray}
\normalsize
where $\Pbf_{E}(m_p,m_p',m_s,m_s'|x^n)$ is the source encoder; $\Pbf_D(\hat{m}_p',\hat{m}_s'|m_p,m_s,b^n)$ is the first part of the decoder that estimates $m_p'$ and $m_s'$ as $\hat{m}_p'$ and $\hat{m}_s'$; $\Pbf_{\Phi}(y^n|m_p,\hat{m}_p',m_s,\hat{m}_s',b^n)$ is the second part of the decoder that reconstructs the source sequence.

{\textbf{Codebook generation}}: We independently generate $2^{n(R_p+R_p')}$ sequences in $\Ucal^n$ according to $\prod_{t=1}^n\Pbar_{U}(u_t)$ and index by $(m_p,m_p')\in[1:2^{nR_p}]\times[1:2^{nR_p'}]$. We use $\Ccal_U^{(n)}$ to denote this random codebook. For each $(m_p,m_p')\in[1:2^{nR_p}]\times[1:2^{nR_p'}]$, we independently generate $2^{n(R_s+R_s')}$ sequences in $\Vcal^n$ according to $\prod_{t=1}^n\Pbar_{V|U}(v_t|u_t(m_p,m_p'))$ and index by $(m_p,m_p',m_s,m_s')$, $(m_s,m_s')\in [1:2^{nR_s}]\times[1:2^{nR_s'}]$. We use $\Ccal_V^{(n)}(m_p,m_p')$ to denote this random codebook.

{\textbf{Encoder}}: The encoder $\Pbf_{E}(m_p,m_p',m_s,m_s'|x^n)$ is a likelihood encoder \cite{song-isit2014} that chooses $M_p, M_p', M_s, M_s'$ stochastically according to the following probability:
$$\Pbf_{E}(m|x^n)=\frac{\Lcal(m|x^n)}{\sum_{\bar{m}\in \Mcal}\Lcal(\bar{m}|x^n)}$$
where $m=(m_p,m_p',m_s,m_s')$, $\Mcal=[1:2^{nR_p}]\times[1:2^{nR_p'}]\times[1:2^{nR_s}]\times[1:2^{nR_s'}]$, and 
$$\Lcal(m|x^n)=\Pbar_{X^n|V^n}(x^n|v^n(m)).$$

{\textbf{Decoder}}: The decoder has two parts. Let $\Pbf_D(\hat{m}_p',\hat{m}_s'|m_p,m_s,b^n)$ be a good channel decoder with respect to the superposition sub-codebook $\{v^n(m_p,a_p,m_s,a_s)\}_{a_p,a_s}$ and the memoryless channel $\Pbar_{B|V}$. For the second part of the decoder, fix a function $\phi(\cdot,\cdot)$. Define $\phi^n(v^n,b^n)$ as the concatenation $\{\phi(v_t,b_t)\}_{t=1}^n$ and set the decoder $\Pbf_{\Phi}$ to be the deterministic function
\begin{eqnarray}
&&\Pbf_{\Phi}(y^n|m_p,\hat{m}_p',m_s,\hat{m}_s',b^n)\nonumber\\
&\triangleq& {1}\{y^n=\phi^n(v^n(m_p,\hat{m}_p',m_s,\hat{m}_s'),b^n)\}.\nonumber
\end{eqnarray}

{\textbf{Analysis}}: We examine the distortions at the two receivers one at a time. To analyze the distortion at the legitimate receiver, we will consider four distributions, the induced distribution $\Pbf$, two approximating distributions $\Qbf^{(1)}$ and $\Qbf^{(2)}$, and an auxiliary distribution $\Qbf'$ that helps with the analysis. The idea is to show that 1) the system has nice behavior for distortion under $\Qbf^{(2)}$; and 2) $\Pbf$ and $\Qbf^{(2)}$ are close in total variation (on average over the random codebook) through $\Qbf^{(1)}$. To analyze the distortion at the eavesdropper, we will consider the induced distribution $\Pbf$ together with an auxiliary distribution $\tilde{\Qbf}$.

\subsection{Distortion at the Legitimate Receiver}
This part of the proof follows the same idea of the achievability proof for the Wyner-Ziv setting using the likelihood encoder given in \cite{song-isit2014}. For clarity, we outline the key steps and some technical details are referred to \cite{song-isit2014}.

The approximating distributions $\Qbf^{(1)}$ and $\Qbf^{(2)}$ are defined through an idealized distribution $\Qbf$ of the structure given in Fig.\ref{superposition}. This idealized distribution $\Qbf$ can be written as
\begin{eqnarray}
&&\Qbf(x^n,b^n,w^n,m_p,m_p',m_s,m_s',u^n,v^n)\nonumber\\
&=&Q(m_p,m_p',m_s,m_s')\Qbf(u^n|m_p,m_p')\Qbf(v^n|u^n,m_s,m_s')\nonumber\\
&&\Qbf(x^n,b^n,w^n|m_p,m_p',m_s,m_s')\\
&=&\frac{1}{2^{n(R_p+R_p'+R_s+R_s')}}{1}\{u^n=U^n(m_p,m_p')\}\nonumber\\
&&1\{v^n=V^n(m_p,m_p',m_s,m_s')\}\nonumber\\
&&\Pbar_{X^nB^nW^n|V^n}(x^n,b^n,w^n|V^n(m_p,m_p',m_s,m_s')) \\
&=&\frac{1}{2^{n(R_p+R_p'+R_s+R_s')}}{1}\{u^n=U^n(m_p,m_p')\}\nonumber\\
&&1\{v^n=V^n(m_p,m_p',m_s,m_s')\}\nonumber\\
&&\prod_{t=1}^n \Pbar_{X|V}(x_t|v_t)\Pbar_{BW|X}(b_t,w_t|x_t), \label{Qf}
\end{eqnarray}
where $(\ref{Qf})$ follows from the Markov relation $V\inout X\inout BW$.

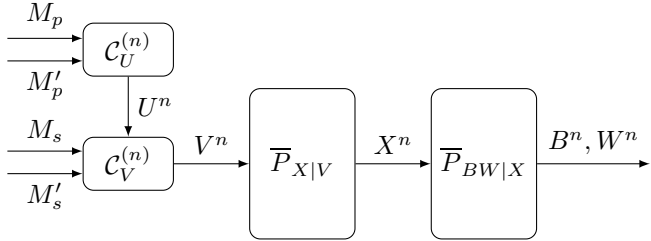
\begin{figure}[h]
\begin{center}
\begin{tikzpicture}[node distance=1cm]
 \node (src1)   [coordinate] {};
 \node (src11) [coordinate,above=0.15cm of src1] {};
 \node (src12) [coordinate, below=0.15cm of src1] {};
 \node (enc1)   [node,minimum width=12mm,right=10mm of src1] {$\Ccal_U^{(n)}$};
 \node (src2)   [coordinate,below=1.5cm of src1] {};
 \node (src21) [coordinate,above=0.15cm of src2] {};
 \node (src22) [coordinate, below=0.15cm of src2] {};
 \node (chv)    [coordinate,below=5mm of src1] {};
 \node (label1) [coordinate,right=2cm of src1] {};
 \node (enc2)   [node,minimum width=12mm,below=0.8cm of enc1] {$\Ccal_V^{(n)}$};
 \node (split1) [coordinate] at (src1 -| enc2.center) {};
 \node (chh)    [coordinate,right=27mm of enc2] {};
 \node (ch)     [node,minimum width=14mm,minimum height=2cm,right=1cm of enc2]  {$\Pbar_{X|V}$};
 \node (ch2)     [node,minimum width=14mm,minimum height=2cm,right=1cm of ch]  {$\Pbar_{BW|X}$};
 \node (sink) [coordinate, right=1.5cm of ch2] {};

 \draw[arw] (src11) to node[midway,above]{$M_p$} (src11 -| enc1.west);
 \draw[arw] (src12) to node[midway,below]{$M_p'$} (src12 -| enc1.west);
 \draw[arw] (src21) to node[midway,above]{$M_s$} (src21 -| enc2.west);
 \draw[arw] (src22) to node[midway,below]{$M_s'$} (src22 -| enc2.west);
\draw[arw] (enc1) to node[right]{$U^n$} (enc2.north);
 \draw[arw] (enc2) to node [midway,above] {$V^n$} (ch);
 \draw[arw] (ch) to node[midway,above]{$X^n$} (ch2);
 \draw[arw] (ch2) to node[midway,above]{$B^n,W^n$} (sink);
\end{tikzpicture}
\caption{Idealized distribution $\Qbf$ via a superposition codebook and memoryless channels $\Pbar_{X|V}$ and $\Pbar_{BW|X}$.}
\label{superposition}
\end{center}
\end{figure}

Note that the encoder $\Pbf_E$ satisfies
\begin{eqnarray}
\Pbf_E(m_p,m_p',m_s,m_s'|x^n)= \Qbf(m_p,m_p',m_s,m_s'|x^n) \label{PE}.
\end{eqnarray}

Furthermore, it can be verified with the same technique used in \cite{song-isit2014} that the idealized distribution $\Qbf$ satisfies:
\begin{eqnarray}
&&\Ebb_{\Ccal^{(n)}}\left[\Qbf(x^n,b^n,w^n,u^n,v^n)\right]\nonumber\\
&=&\Pbar_{X^nB^nW^nU^nV^n}(x^n,b^n,w^n,u^n,v^n), \label{Qexpectation}
\end{eqnarray}
where $\Ebb_{\Ccal^{(n)}}[\cdot]$ denotes $\Ebb_{\Ccal_U^{(n)}}\left[\Ebb_{\Ccal_V^{(n)}}\left[\cdot \right]\right].$

We now define the distributions $\Qbf^{(1)}$ and $\Qbf^{(2)}$ via the idealized distribution $\Qbf$ as follows:
\small
\begin{eqnarray}
&&\Qbf^{(1)}(x^n,b^n,w^n,u^n,v^n,m_p,m_p',m_s,m_s',\hat{m}_p',\hat{m}_s')\nonumber\\
&\triangleq&\Qbf(x^n,b^n,w^n,m_p,m_p',m_s,m_s',u^n,v^n)\nonumber\\
&&\Pbf_D(\hat{m}_p',\hat{m}_s'|m_p,m_s,b^n)\Pbf_{\Phi}(y^n|m_p,\hat{m}_p',m_s,\hat{m}_s')
\end{eqnarray}
\begin{eqnarray}
&&\Qbf^{(2)}(x^n,b^n,w^n,u^n,v^n,m_p,m_p',m_s,m_s',\hat{m}_p',\hat{m}_s')\nonumber\\
&\triangleq&\Qbf(x^n,b^n,w^n,m_p,m_p',m_s,m_s',u^n,v^n)\nonumber\\
&&\Pbf_D(\hat{m}_p',\hat{m}_s'|m_p,m_s,b^n)\Pbf_{\Phi}(y^n|m_p,{m}_p',m_s,{m}_s').
\end{eqnarray}
\normalsize
Notice that the distributions $\Qbf^{(1)}$ and $\Qbf^{(2)}$ differ only in $\Pbf_{\Phi}$. From $(\ref{Qexpectation})$, it can be shown that the distortion under distribution $\Qbf^{(2)}$ averaged over the random codebook is given by the following:
\small
\begin{eqnarray}
&&\Ebb_{\Ccal^{(n)}}\left[\Ebb_{\Qbf^{(2)}}[d_b(X^n,Y^n)]\right]\nonumber\\
&=&\sum_{x^n,v^n,b^n}\Ebb_{\Ccal^{(n)}}\left[\Qbf(x^n,v^n,b^n)\right]d_b(x^n,\phi^n(v^n,b^n))\\
&=&\sum_{x^n,v^n,b^n}\Pbar_{X^nV^nB^n}(x^n,v^n,b^n)d_b(x^n,\phi^n(v^n,b^n))\\
&=&\Ebb_{\Pbar}\left[d_b(X,Y)\right]. \label{dUnderQ2}
\end{eqnarray}
\normalsize

Define the auxiliary distribution $\Qbf'$ on a subset of the variables as
\small
\begin{eqnarray}
\Qbf'(m_p,m_p',x^n)\triangleq \frac{1}{2^{n(R_p+R_p')}}\Pbar_{X^n|U^n}(x^n|U^n(m_p,m_p')).
\end{eqnarray}
\normalsize

Since $R_s+R_s'>I_{\Pbar}(X;V|U)$, applying the generalized superposition soft-covering lemma, we have
\begin{eqnarray}
\Ebb_{\Ccal^{(n)}}\left[\left\Vert \Qbf_{M_pM_p'X^n}-\Qbf'_{M_pM_p'X^n}\right\Vert_{TV}\right]\leq e^{-\gamma_2 n} \triangleq{\epsilon_2}_n.\label{Q2Qp}
\end{eqnarray}

Also since $R_p+R_p'>I_{\Pbar}(U;X)$, applying the basic soft-covering lemma, we have
\begin{eqnarray}
\Ebb_{\Ccal^{(n)}}\left[\left\Vert\Pbar_{X^n}-\Qbf'_{X^n}\right\Vert_{TV}\right]\leq e^{-\gamma_1 n}\triangleq{\epsilon_1}_n.\label{Qp2Pbar}
\end{eqnarray}

Using Property \ref{property-tv}$(\ref{b})$, $(\ref{Qp2Pbar})$, and $(\ref{Q2Qp})$, we obtain
\begin{eqnarray}
\Ebb_{\Ccal^{(n)}}\left[\left\Vert\Qbf_{X^n}-\Pbar_{X^n}\right\Vert_{TV}\right]\leq {\epsilon_1}_n+{\epsilon_2}_n\triangleq {\epsilon_3}_n.
\end{eqnarray}

Therefore, by definitions of $\Pbf$ and $\Qbf^{(1)}$ and Property \ref{property-tv}$(\ref{c})$, we have
\begin{eqnarray}
\Ebb_{\Ccal^{(n)}}\left[\left\Vert \Pbf-\Qbf^{(1)} \right\Vert_{TV}\right]\leq {\epsilon_3}_n \label{P2Q1}
\end{eqnarray}
where the distributions are taken over $X^nB^nW^nM_pM_p'M_sM_s'\hat{M}_p'\hat{M}_s'Y^n$.

On the one hand, we need to apply the Wyner-Ziv technique to complete the distortion bound at the legitimate receiver. 
Since $R_p'<I_{\Pbar}(U;B)$ and $R_s'<I_{\Pbar}(V;B|U)$, the codebooks are randomly generated, and $M_p'$ and $M_s'$ are uniformly distributed under $\Qbf$, it is well known that the maximum likelihood decoder (as well as a variety of other decoders) will drive the error probability to zero as $n$ goes to infinity. This can be seen from Fig. \ref{superposition}, by identifying for fixed $M_p$ and $M_s$, that $M_p'$ and $M_s'$ are the messages to be transmitted over the memoryless channel $\Pbar_{B|V}$ with the superposition codebook. Specifically,
\begin{eqnarray}
\Ebb_{\Ccal^{(n)}}\left[\Pbb_{\Qbf^{(1)}}\left[(\hat{M}_p', \hat{M_s'})\neq(M_p',M_s')\right]\right]\leq \delta_n \rightarrow_n 0.
\end{eqnarray}

With Lemma 2 of \cite{song-isit2014}, it can be shown that 
\begin{eqnarray}
&&\Ebb_{\Ccal^{(n)}}\bigg[\bigg\Vert\Qbf_{X^nB^nW^nM_p\hat{M}_p'M_s\hat{M}_s'}^{(1)}\nonumber\\
&&\ \ \ \ \ \ \ \ \ -\Qbf_{X^nB^nW^nM_p{M}_p'M_s{M}_s'}^{(2)}\bigg\Vert_{TV}\bigg]\nonumber\\
&\leq&\Ebb_{\Ccal^{(n)}}\left[\Pbb_{\Qbf^{(1)}}\left[(\hat{M}_p', \hat{M_s'})\neq(M_p',M_s')\right]\right]\\
&\leq&\delta_n. \label{Q12Q2}
\end{eqnarray}

Hence, by $(\ref{dUnderQ2})$, $(\ref{P2Q1})$ and $(\ref{Q12Q2})$ and Property \ref{property-tv}$(\ref{a})$ and $(\ref{b})$, we obtain
\begin{eqnarray}
&&\Ebb_{\Ccal^{(n)}}\left[\Ebb_{\Pbf}[d_b(X^n,Y^n)]\right]\nonumber\\
&\leq& \Ebb_{\Pbar}[d_b(X,Y)]+{d_b}_{max}({\epsilon_3}_n+\delta_n)\\
&\leq&D_b+{d_b}_{max}({\epsilon_3}_n+\delta_n). \label{dis}
\end{eqnarray}
This completes the distortion analysis at the legitimate receiver.

\subsection{Distortion at the Eavesdropper}
To evaluate the enforced distortion at the eavesdropper with the best possible decoder, we will consider two distributions: the system induced distribution $\Pbf$ and an auxiliary distribution $\tilde{\Qbf}^{(i)}$ defined as
\small
\begin{eqnarray}
&&{\tilde{\Qbf}}^{(i)}(m_p,m_p',m_s,m_s',u^n,x,w^n)\nonumber\\
&\triangleq&\frac{1}{2^{n(R_p+R_p'+R_s+R_s')}}{1}\{u^n=U^n(m_p,m_p')\}\nonumber\\
&&\prod_{t=1}^n\Pbar_{W|U}(w_t|U_t(m_p,m_p')) \Pbar_{X|WU}(x|w_i,U_i(m_p,m_p')). \label{defQt}
\end{eqnarray}
\normalsize

Note that under ${\tilde{\Qbf}}^{(i)}$, we have the markov relation
\begin{eqnarray}
X\inout U_i(M_p,M_p')W_i\inout M_pM_p'M_sM_s'W^n. \label{Qtilde-markov}
\end{eqnarray}

The auxiliary distribution $\tilde{\Qbf}^{(i)}$ has the following property:
\begin{eqnarray}
&&\mathbb{E}_{\mathcal{C}_{U^n}}\left[\tilde{\Qbf}^{(i)}(u^n,w^n,x)\right]\nonumber\\
&=&\prod_{t=1}^n\Pbar_U(u_t)\Pbar_{W|U}(w_t|u_t)\Pbar_{X|WU}(x|w_i,u_i). \label{expected-iid}
\end{eqnarray}

Recall that under distribution $\mathbf{Q}$, for fixed $M_s=m_s$,
\small
\begin{eqnarray}
&&\mathbf{Q}(m_p,m_p',m_s',w^n,x_i|m_s)\nonumber\\
&=&\frac{1}{2^{n(R_p+R_p'+R_s')}}\Pbar_{W^n|V^n}(w^n|V^n(m_p,m_p',m_s,m_s'))\nonumber\\
&&\Pbar_{X|WVU}(x_i|w_i,V_i(m_p,m_p',m_s,m_s'),U_i(m_p,m_p'))
\end{eqnarray}
\normalsize

Since $R_s'>I_{\Pbar}(V;W|U)$, by applying the generalized superposition soft-covering lemma, we have for fixed $m_s$,
\small
\begin{eqnarray}
\Ebb_{\Ccal^{(n)}}\left[\left\Vert\tilde{\Qbf}^{(i)}_{M_pM_p'W^nX}-\Qbf_{M_pM_p'W^nX_i}\right\Vert_{TV}\right]\leq e^{-\gamma_4 n}\triangleq{\epsilon_4}_n .
\end{eqnarray}
\normalsize
Averaging over $M_s$, we have
\begin{eqnarray}
\Ebb_{\Ccal^{(n)}}\left[\left\Vert\tilde{\Qbf}^{(i)}_{M_pM_p'M_sW^nX}-\Qbf_{M_pM_p'M_sW^nX_i}\right\Vert_{TV}\right]\leq {\epsilon_4}_n, \label{Q2Qt}
\end{eqnarray}
and by Property \ref{property-tv}$(\ref{b})$, $(\ref{P2Q1})$ and $(\ref{Q2Qt})$,
\begin{eqnarray}
&&\Ebb_{\Ccal^{(n)}}\left[\left\Vert\tilde{\Qbf}^{(i)}_{M_pM_p'M_sW^nX}-\Pbf_{M_pM_p'M_sW^nX_i}\right\Vert_{TV}\right]\nonumber\\
&\leq& {\epsilon_3}_n+{\epsilon_4}_n\triangleq {\epsilon_5}_n. \label{e5}
\end{eqnarray}

Also note that, since $R_p+R_p'>0$, we can invoke Lemma \ref{gsc} by identifying 
$$(R_1,R_2,U,V,X,Z)\leftarrow (0,R_p+R_p',\varnothing, U, \varnothing ,U),$$
where the left side symbols represents the symbols from Lemma \ref{gsc}. This gives us
\begin{eqnarray}
\Ebb_{\Ccal^{(n)}}\left[\left\Vert\tilde \Qbf^{(i)}_{u_i(M_p,M_p')}-\Pbar_U\right\Vert_{TV}\right]\leq e^{-\gamma_6n}\triangleq{\epsilon_6}_n.\label{e6}
\end{eqnarray}

Combining $(\ref{dis})$, $(\ref{e5})$ and $(\ref{e6})$,  we get
\begin{eqnarray}
&&\Ebb_{\Ccal^{(n)}}\bigg[\sum_{i=1}^n\left\Vert \Pbf_{M_pM_p'M_sW^nX_i}-\tilde\Qbf^{(i)}_{M_pM_p'M_sW^nX}\right\Vert_{TV}\nonumber\\
&&+\sum_{i=1}^n\left\Vert \tilde\Qbf^{(i)}_{u_i(M_p,M_p')}-\Pbar_U\right\Vert_{TV}\nonumber\\
&&+\left\vert\Ebb_{\Pbf}[d_b(X^n,Y^n)]-D_b\right\vert\bigg]\nonumber\\
&\leq&n{\epsilon_5}_n+n{\epsilon_6}_n+{d_b}_{max}({\epsilon_3}_n+\delta_n)\\
&\leq&ne^{-n\min(\gamma_1,\gamma_2, \gamma_4,\gamma_6)}+{d_b}_{max}({\epsilon_3}_n+\delta_n)\\
&\triangleq&\epsilon_n\rightarrow_n 0.
\end{eqnarray}
\normalsize

Therefore, there exists a codebook under which
\begin{eqnarray}
\sum_{i=1}^n\left\Vert P_{M_pM_p'M_sW^nX_i}-\tilde Q^{(i)}_{M_pM_p'M_sW^nX}\right\Vert_{TV}\leq \epsilon_n,\\
\sum_{i=1}^n\left\Vert \tilde Q^{(i)}_{u_i(M_p,M_p')}-\Pbar_U\right\Vert_{TV}\leq\epsilon_n, \label{tot}
\end{eqnarray} 
and
\begin{eqnarray}
\Ebb_{P}[d_b(X^n,Y^n)]\leq D_b+\epsilon_n.
\end{eqnarray}
\normalsize

Finally, the distortion at the eavesdropper can be lower bounded by
\small
\begin{eqnarray}
&&\min_{z^n(m_p,m_s,w^n)}\mathbb{E}_P \left[d_w(X^n,z^n(M_p,M_s,W^n))\right]\nonumber\\
&\geq&\min_{z^n(m_p,m_p',m_s,w^n)}\mathbb{E}_{P}\left[d_w(X^n,z^n(M_p,M_p',M_s,W^n))\right]\\
&=&\frac1n \sum_{i=1}^n\min_{z_i(m_p,m_p',m_s,w^n)}\nonumber\\
&&\ \ \ \ \ \ \ \ \ \ \ \ \ \ \ \ \mathbb{E}_{P}\left[d_w(X_i,z_i(M_p,M_p',M_s,W
^n))\right]\\
&\geq&\frac1n \sum_{i=1}^n\min_{z_i(m_p,m_p',m_s,w^n)}\nonumber\\
&&\ \ \ \ \ \ \ \ \ \ \ \ \ \ \ \ \mathbb{E}_{{\tilde{Q}^{(i)}}}\left[d_w(X,z_i(M_p,M_p',M_s,W
^n))\right]\nonumber\\
&&-\epsilon_n {d_w}_{max}\\
&=&\frac1n \sum_{i=1}^n \min_{z(u,w)}\mathbb{E}_{{\tilde{Q}^{(i)}}} \left[d_w(X,z(u_i(M_p,M_p'),W_i))\right]\nonumber\\
&&-\epsilon_n {d_w}_{max} \label{using-markov}\\
&\geq&\frac1n\sum_{i=1}^n\min_{z(u,w)}\Ebb_{\Pbar}\left[d_w(X,z(U,W))\right]-2\epsilon_n {d_w}_{max} \label{last}
\end{eqnarray}
where $(\ref{using-markov})$ uses the markov relation under ${\tilde{Q}}^{(i)}$ given in $(\ref{Qtilde-markov})$, and $(\ref{last})$ uses $\left\Vert\tilde Q^{(i)}_{u_i(M_p,M_p')}-\Pbar_U\right\Vert_{TV}\leq \epsilon_n$ from $(\ref{tot})$ and the fact that 
$$\tilde Q^{(i)}_{W_iX|U_i}(w_i,x|u_i)=\Pbar_{W|U}(w_i|u_i) \Pbar_{X|WU}(x|w_i,u_i)$$
from $(\ref{defQt})$.
\normalsize

This completes the distortion analysis at the eavesdropper.

\section{Example}
We give an example for lossless compression case with Hamming distortion measure for the eavesdropper. The Hamming distortion measure is defined as 
\begin{displaymath}
   d(x,y) = \left\{
     \begin{array}{lr}
       0,  &x=y\\
       1,  &\text{otherwise.}
     \end{array}
   \right.
\end{displaymath} 
Let $X^n$ be a sequence of i.i.d. $Bern(p)$ source, and let $B^n$ and $W^n$ be side information obtained through a binary erasure channel (BEC) and binary symmetric channel (BSC), respectively, i.e.
\begin{eqnarray*}
\Pbar_{X}(0)=1-\Pbar_{X}(1)=1-p,\\
\Pbar_{B|X}(e|x)=\alpha,\\
\Pbar_{W|X}(1-x|x)=\beta.
\end{eqnarray*}

This is illustrated in Fig. \ref{bsc-bec}. This type of side information was also considered in \cite{villard-journal}, but only with $Bern(\frac12)$ source.
%
%
\begin{figure}
\begin{center}
\begin{minipage}{.2\textwidth} 
\begin{tikzpicture}
\node (a) [circle] at (0,0) {$1$};
\node (b) [circle] at (0,2) {$0$};
\node (c) [circle] at (2,0) {$1$};
\node (d) [circle] at (2,2) {$0$};
\node (e) [circle] at (2,1) {$e$};
\node (X) [circle] at (0,2.7) {$X$};
\node (B) [circle] at (2,2.7) {$B$};
\draw[->] (a) -- (c) node[pos=.5,sloped,below] {$1-\alpha$};
\draw[->] (a) -- (e) node[pos=.45,below] {$\alpha$};
\draw[->] (b) -- (e) node[pos=.45,above] {$\alpha$};
\draw[->] (b) -- (d) node[pos=.5,sloped,above] {$1-\alpha$};
\end{tikzpicture}
\end{minipage}
\begin{minipage}{.2\textwidth} 
\begin{tikzpicture}
\node (a) [circle] at (0,0) {$1$};
\node (b) [circle] at (0,2) {$0$};
\node (c) [circle] at (2,0) {$1$};
\node (d) [circle] at (2,2) {$0$};
\node (X) [circle] at (0,2.7) {$X$};
\node (B) [circle] at (2,2.7) {$W$};
\draw[->] (a) -- (c) node[pos=.5,sloped,below] {$1-\beta$};
\draw[->] (a) -- (d) node[pos=.45,below] {$\beta$};
\draw[->] (b) -- (c) node[pos=.45,above] {$\beta$};
\draw[->] (b) -- (d) node[pos=.5,sloped,above] {$1-\beta$};
\end{tikzpicture}
\end{minipage}
\end{center}
\caption{Side information $B$ and $W$ correlated with source $X$}
\label{bsc-bec}
\end{figure}
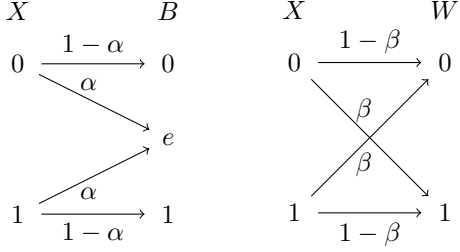

We consider a generic discrete auxiliary random variable $U$ that takes values on $1,...,|\Ucal|$ with $\Pbar_{U}(i)=u_i$ and $\Pbar_{X|U}(0|i)=\delta_i$, $\Pbar_{X|U}(1|i)=1-\delta_i$. It can be shown that the distortion $D_w$ takes the following form. By applying Corollary \ref{lossless-ach}, we can obtain the following theorem.
\begin{thm}
$(R,D_w)$ is achievable for the BEC-BSC side information with Hamming distortion $d_w(\cdot,\cdot)$ if 
\begin{eqnarray*}
R&\geq& \alpha h(p)\\
D_w&\leq& \max_{\{u_i,\delta_i\}_{i=1}^3} \sum_{i=1}^{3} u_i \min(\delta_i,1-\delta_i,\beta)\\
&s.t.&  0\leq u_i, \delta_i \leq 1\\
&&\sum_{i=1}^{3}u_i=1\\
&&\sum_{i=1}^{3}u_i\delta_i=1-p\\
&&\sum_{i=1}^{3}u_i[(1-\alpha)h(\delta_i)-h(\delta_i\ast\beta)]+h(\beta)\geq0
\end{eqnarray*}
where $h(\cdot)$ denotes the binary entropy function.
\end{thm}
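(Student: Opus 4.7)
The plan is to apply Corollary \ref{lossless-ach} and reduce each of its three ingredients to explicit functions of the parameters of the auxiliary variable $U$. Parametrize any finite-support $U$ by $u_i = \Pbar_U(i)$ and $\delta_i = \Pbar_{X|U}(0|i)$; then $u_i,\delta_i \in [0,1]$, $\sum_i u_i = 1$ are automatic and the marginal constraint $\Pbar(X=0)=1-p$ is exactly $\sum_i u_i\delta_i = 1-p$.

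For the rate, because $B$ is obtained from $X$ through an independent BEC$(\alpha)$, the conditional entropy splits as $H(X|B) = \alpha H(X) + (1-\alpha)\cdot 0 = \alpha h(p)$, which gives $R \geq \alpha h(p)$ after taking closure in Corollary \ref{lossless-ach}. For the mutual-information gap, the same BEC identity yields
\begin{equation*}
I(X;B|U) = (1-\alpha)\,H(X|U) = (1-\alpha)\sum_i u_i h(\delta_i).
\end{equation*}
Since $W$ arises from $X$ via a memoryless BSC$(\beta)$ that is conditionally independent of $U$ given $X$, we have $H(W|X,U)=h(\beta)$ and $H(W|U=i) = h(\delta_i \ast \beta)$, giving $I(X;W|U) = \sum_i u_i h(\delta_i\ast\beta) - h(\beta)$. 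Subtracting yields precisely the last inequality in the theorem; the strict condition in Corollary \ref{lossless-ach} is relaxed to $\geq 0$ by the usual continuity/time-sharing argument.

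For the eavesdropper's distortion, the minimizing $z(u,w)$ is the MAP estimator of $X$ given $(U=i,W=w)$, and the per-$i$ Hamming distortion equals
\begin{equation*}
\sum_{w\in\{0,1\}}\min_{x\in\{0,1\}} \Pbar_{XW|U}(x,w|i).
\end{equation*}
Expanding the four joint probabilities $((1-\beta)\delta_i, \beta\delta_i, \beta(1-\delta_i), (1-\beta)(1-\delta_i))$ and splitting into the three regimes (i) $\delta_i \leq \min(1-\delta_i,\beta)$, (ii) $1-\delta_i \leq \min(\delta_i,\beta)$, (iii) $\beta \leq \min(\delta_i,1-\delta_i)$ (WLOG $\beta \leq 1/2$) collapses this expression to $\min(\delta_i,1-\delta_i,\beta)$ in each case. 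Averaging over $i$ produces the objective $\sum_i u_i\min(\delta_i,1-\delta_i,\beta)$ displayed in the theorem.

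Finally, to cap the support at three atoms, I would invoke the Fenchel-Eggleston-Carathéodory support lemma: $U$ must preserve the source marginal $\Pbar_X(0)=1-p$ (one constraint), the distortion functional, and the mutual-information gap, giving a total of three quantities and hence $|\Ucal|\leq 3$. The main obstacle is the case analysis for the MAP error probability—careful bookkeeping of which of $\delta_i,1-\delta_i,\beta$ dominates in each of the two $w$-terms is needed to verify that the per-atom distortion collapses uniformly to the symmetric form $\min(\delta_i,1-\delta_i,\beta)$; the remaining pieces (rate, mutual-information gap, cardinality bound) are direct specializations of standard calculations.
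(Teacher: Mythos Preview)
Your proposal is correct and follows exactly the approach the paper indicates: parametrize $U$ by $(u_i,\delta_i)$ and specialize Corollary~\ref{lossless-ach}, with the paper itself giving only the one-line justification ``By applying Corollary~\ref{lossless-ach}, we can obtain the following theorem.'' Your computations of $H(X|B)=\alpha h(p)$, of the gap $I(X;B|U)-I(X;W|U)$, of the per-atom MAP error $\min(\delta_i,1-\delta_i,\beta)$, and the Fenchel--Eggleston reduction to $|\Ucal|\leq 3$ are all correct and simply fill in details the paper omits.
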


We plot the distortion at the eavesdropper as a function of the source distribution $p$ for fixed $\alpha$ and $\beta$ in Fig. \ref{plot1} and Fig. \ref{plot2}, where the outer bounds are calculated from Theorem \ref{outer}.
\begin{figure}[htbp]
  \centering
  \includegraphics[width=9 cm]{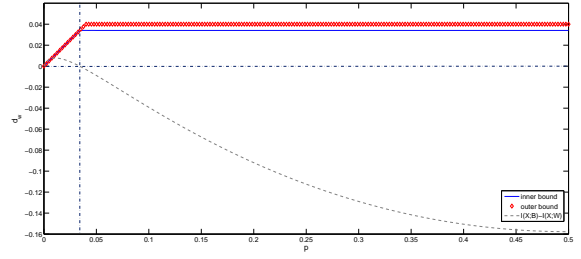}
\caption{Distortion at the eavesdropper as a function of source distribution $p$ with $\alpha=0.4$, $\beta=0.04$}
\label{plot1}
\end{figure}

\begin{figure}[htbp]
  \centering
  \includegraphics[width=9 cm]{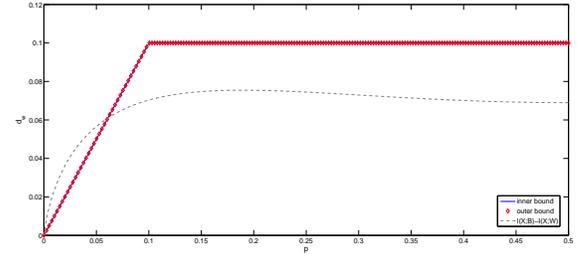}
\caption{Distortion at the eavesdropper as a function of source distribution $p$ with $\alpha=0.4$, $\beta=0.1$}
\label{plot2}
\end{figure}

In Fig. \ref{plot1}, when the legitimate receiver's side information is more capable than the eavesdropper's side information with respect to the source, perfect secrecy at the eavesdropper is achieved; when the eavesdropper's side information is more capable than the legitimate receiver, with our encoding scheme, we achieve a positive distortion at the eavesdropper with no additional cost on the compression rate to ensure lossless decoding at the legitimate receiver. It is worth noting that our scheme encodes the source so that it favors the side information for the legitimate receiver even if the legitimate receiver's side information is less capable, as opposed to the case where the regular Wyner-Ziv (Slepian-Wolf) encoding scheme that gives the same compression rate but no distortion at the eavesdropper. 

In Fig. \ref{plot2}, since the legitimate receiver's side information is always more capable than the eavesdropper's side information, it is a direct application of Corollary \ref{lossless-capable} and perfect secrecy is ensured.

\section{Conclusion}
We have investigated the performance of a secrecy system with side information at receivers under the rate-distortion criteria. Our results show that even if the legitimate receiver has a weaker side information, a positive distortion can be enforced to the eavesdropper. Although exact bounds have been obtained for several special cases, the outer bound for arbitrarily correlated side information is not tight. This suggests an interesting direction for future work.
\bibliographystyle{ieeetr}

\section{Acknowledgement}
This research was supported in part by the Air Force Office of Scientific Research under Grant FA9550-12-1-0196 and MURI Grant FA9550-09-05086 and in part by National Science Foundation under Grants CCF-1116013, CNS-09-05086 and CCF-1350595.

\bibliography{lossy-sideinfo}
\end{document}